\documentclass[journal]{IEEEtran}
\usepackage[dvips]{graphicx}
\usepackage{amsmath}
\usepackage{amssymb}
\usepackage{amsfonts}

\usepackage{setspace}


\newtheorem{proposition}{Proposition}

\begin{document}

\title{Joint Power and Resource Allocation for Block-Fading 
Relay-Assisted Broadcast Channels}

\author{
Mohammad~Shaqfeh,~\IEEEmembership{Member,~IEEE,}
Hussein~Alnuweiri,~\IEEEmembership{Member,~IEEE}
\thanks{M.~Shaqfeh and H.~Alnuweiri are with the Department of Electrical 
and Computer Engineering, Texas A\&M University at Qatar,
C/o Qatar Foundation, PO Box 23874, Doha, Qatar.
(E-mail:  \{Mohammad.Shaqfeh, Hussein.Alnuweiri\}@qatar.tamu.edu).}
\thanks{This work was supported by 
the National Priorities Research Program from 
Qatar National Research Fund.}
}

\maketitle

\begin{abstract}
We provide the solution for optimizing the power and resource allocation 
over block-fading relay-assisted broadcast channels 
in order to maximize the long term average achievable rates region
of the users.
The problem formulation assumes regenerative (repetition coding) decode-and-forward (DF) relaying strategy,
long-term average total transmitted power constraint,
orthogonal multiplexing of the users messages within
the channel blocks,
possibility to use a direct transmission (DT) mode from the base station
to the user terminal directly or a relaying (DF) transmission mode,
and partial channel state information.
We show that our optimization problem 
can be transformed into an equivalent ``no-relaying'' 
broadcast channel optimization problem 
with each actual user substituted by two virtual users  
having different channel qualities and multiplexing weights.
The proposed power and resource allocation strategies are expressed in closed-form
that can be applied practically in
centralized relay-assisted wireless networks.
Furthermore, we show by numerical examples that our scheme enlarges the 
achievable rates region significantly.
\end{abstract}

\begin{keywords}
wireless networks, relaying, resource allocation, multiuser diversity, long-term-evolution (LTE)
\end{keywords}

\section{Introduction}
\label{sec:introduction}
Although the relay channel was investigated by the
information theory researchers long time ago (\cite{Va:1971}, \cite{COEL:1979}),
the topic of cooperation/relaying schemes have recently become 
a very active research area within both
the information theory as well as 
communications engineering societies. Few examples of recent works among many others are 
\cite{NiTsWo:2004}, \cite{LaLiEl:2006}, \cite{KrGaGu:2005}, 
\cite{ElMoZa:2006}, \cite{HoZh:2005}, \cite{ReKuVe:2004}, \cite{GuEr:2007}, and
\cite{StEr:2004}. 
It is well-understood now that relaying strategies can 
improve the coverage of wireless networks
by providing higher 
data rates or better transmission reliability to 
terminals at the edge of a wireless cell (i.e.~terminals which 
receive low signal power from the base station).

Relaying technologies are also becoming part of the telecommunication standards \cite{YaHuXuMa:2009}. 
Although many advanced schemes 
based on the cooperation of the mobile users to help each other are being studied in the literature, 
the first actual deployment step which will take place within 
the 3GPP\footnote{Third Generation Partnership Project.}
Long-Term Evolution (LTE)-Advanced (c.f.~\cite{SeToBa:2009}, \cite{DaPaSkBe:2008}, 
\cite{EkFuKaMePaToWa:2006}) standard is based on fixed access points\footnote{The
relays will be base stations but without a wired or microwave connection
to the backhaul network.} 
to do the relaying and within a centralized scheme in which the 
e-nodeB (base station with backhaul connection) takes the scheduling 
and resource allocation decisions. 
One major objective in 3GPP evolution 
is to utilize the scarce wireless system resources efficiently
because achieving the high Quality-of-Service (QoS) targets through over-provisioning 
is uneconomical due to the relatively high cost for 
transmission capacity in cellular access 
networks \cite{Ek:2009}.   

Our objective here is to obtain the optimal (in information theoretic perspective) 
resource allocation schemes but with applying system constraints that 
are relevant to the LTE-Advanced standard so that it can be 
applied practically in the "first introduction" of relays to 
the wireless "cellular systems" industry. We have been able to 
derive the optimal power, resource allocation and scheduling polices 
that are provably based on closed-form formulations 
which are practical for implementation.
Our proposed resource allocation schemes provide an integrated solution
to exploit the cooperative (relaying) diversity gains \cite{NiTsWo:2004} as well as the
multiuser diversity gains \cite{TsVi:05}.

Optimal dynamic resource allocation over fading channels has been investigated 
in the literature for non-cooperative (i.e.~with no relaying) wireless systems in \cite{Gal:68} 
for the single-user case, in \cite{TsHa:98} for multi-access (many-to-one) 
channels, and in \cite{LiGo1:2001} and \cite{Ts:98} for 
broadcast (one-to-many) channels.
Optimal resource allocation for fading relay channels has been 
studied in \cite{LiVe:2005}, \cite{LiVeVi:2007} for single source and single
destination node case.
Resource allocation for broadcast-relay channels was also 
treated in \cite{ReKuVe:2005} and \cite{LiVe:2007}, where the problem setup 
assumes that users terminals cooperate together (i.e.~act as relays to help
other users). This is a different problem setup than the problem
in this paper in which we assume that fixed-relays 
are used to assist in the transmission without being destination
nodes themselves.
In \cite{MeDa:2008} and \cite{MeDa2:2008}, joint power and resource allocation
over multiple access relay-assisted channels was studied for
a constant channel realization scenario, where it was demonstrated 
that joint allocation of power and channel
resources can enlarge the achievable rate region.

In this paper, we consider the broadcast relay-assisted channels
under block-fading conditions (i.e.~over many channel
realizations). In \cite{ViMu:2005}, \cite{KaPoHa:2009}, 
and \cite{SaAdRaYaFaKi:2010} relay-assisted broadcast (downlink)
channels with centralized scheduling were considered with
different performance metrics such as throughput-guarantees
or fairness measures. In our work, we tackle the problem
from an information-theoretic perspective in which we aim to maximize
the achievable rate region. This is equivalent to the problem of
minimizing the transmitted power to achieve requested rate demands.
To the best of our knowledge, maximizing the achievable rate region
of block-fading relay-assisted broadcast channels 
for the case of applying half-duplex regenerative decode-and-forward (RDF) 
\cite{NiTsWo:2004} relaying strategy and orthogonal
multiplexing of user messages within the channel
blocks has not been treated in the literature.
As well-known form the information theory literature,
the achievable rate region can be enlarged (i.e.~improved)
by using non-orthogonal (superposition-based with successive interference
cancellation at the receivers) transmission strategies
as well as more advanced relaying strategies\footnote{For example, 
full-duplex relaying strategies, which require the relay terminal 
to be able to receive and transmit simultaneously on the same channel,
gives better throughput gains than half-duplex strategies. 
However, this requirement is difficult to implement in practice 
because of the complexity of providing electrical isolation 
between the transmitter and receiver circuitries. 
Thus, half duplex relaying strategies are favorable 
from practical point-of-view although they result in 
multiplexing loss because the same message is transmitted 
over two orthogonal slots.}
than RDF.
However, we restrict our
optimization problem with systems constraints
which are favorable for a practical deployment.

We formulate our optimization problem assuming partial 
channel-state-information (CSI). 
Similar to the definition adopted in \cite{GuEr:2007}, 
by partial CSI we mean that the transmitter (source node) knows
the channel state amplitudes only (i.e.~the channel power gain),
while the receiver knows both the channel amplitude and phase.
This can be usually accomplished by separate low rate feedback channels.
According to LTE specifications
\cite{DaPaSkBe:2008}, CSI is obtained by various means and forwarded
to the base station, where the resource allocation decisions are 
done. The scheduling information are provided to the
other nodes in the network using dedicated control channels.
Furthermore, we apply a single long-term
average total (i.e.~sum) power constraint in the formulation of 
the optimization problem 
instead of using separate power constraints
for every involved node (the source and relay nodes).
This approach has been adopted in some works in the literature such as \cite{GuEr:2007}.
The main reasons for applying a single power constraint are:
(i) using separate power constraints for every node is less dynamic 
than using a sum-power constraint for all nodes since the solution 
of the latter formulation involves the optimal power distribution among 
the nodes to maximize the required objective (maximizing the achievable rate region 
in our case). On the other hand, with separate power constraints, the power 
distribution among the nodes is fixed beforehand and, as a result, 
we lose one factor that can help us to maximize our objective, 
(ii) we believe that the sum-power constraint is relevant in practice since the 
relay nodes are fixed access points that are not limited in energy supply 
(i.e. they are not running on batteries like mobile handsets). 
So, we can distribute the power among the relays based on the 
users' distribution in the cell, and hence we can involve one relay more 
than the others if there are more users in its vicinity. 
Note that if no relays are involved at all,
the base station will transmit all the power anyway. In our
problem formulation, we are just
re-distributing some of that total power among the relays
in the most efficient way,
(iii) solving the optimization problem with many power constraints
will be based on
bisection-based methods over many dimensions similar to the 
solution in \cite{MeDa:2008} for the multiple-access case,
while using a sum power constraint will result in  
closed-form solutions which are simpler and more practical
to be applied in real systems.

Following this introduction section, we present 
in Section~\ref{sec:problem} the channel model and the optimization 
problem formulation as well as the applied mathematical notation.
In Section~\ref{sec:solution} we provide the solution of the optimization 
problem. Despite the relative large number of optimization variables involved in 
the problem, we demonstrate that the problem can be solved by: 
(i) characterizing the maximum achievable rates using RDF links with 
optimal power allocation over the source and the relay, 
and (ii) transforming the problem into an equivalent ``no-relaying'' 
broadcast channel with each ``actual'' user replaced by two ``virtual'' users  
having different channel qualities and multiplexing weights.
In order to understand the mathematical solution steps,
the reader is advised to go through \cite[Section~III-B]{LiGo1:2001}
in which the orthogonal ``non-cooperative'' broadcast channel
was considered.
We give the solution for our problem which includes 
the closed-form policies to 
select the best relay, schedule the users across the resource units, choose the optimal transmission 
mode and control the transmission power.
We include also the case of optimizing resource 
allocation over each channel realization separately and use it for 
comparison with the optimal case.
We provide numerical examples in Section~\ref{sec:numerical} 
to demonstrate the advantages of our suggested resource allocation scheme.
Then, we summarize the main conclusions of our work 
in Section~\ref{sec:Conclusion}.

\section{System Model and Problem Formulation}
\label{sec:problem}

\subsection{Channel Model}

\begin{figure}[htb]
 \begin{center}
   \includegraphics[scale=0.33]{./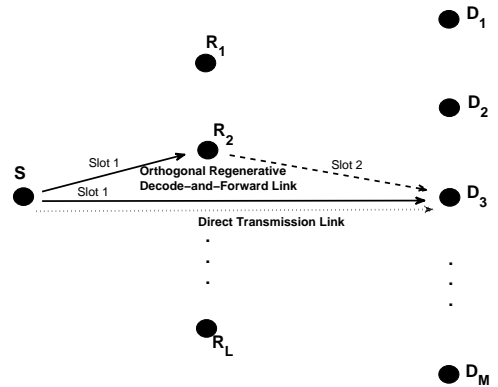}
\caption{System diagram of relay-assisted broadcast channels
with one source node, $L$ relay nodes and $M$ destination
nodes. The message sent to a destination node can go through
half-duplex RDF link with the assistance of one of the relay nodes,
or through a direct transmission link from the source node
to the destination node without the assistance of the relays.}
   \label{fig:system_model}
\end{center}
\end{figure}

We consider the $M$-user relay-assisted broadcast system
shown in Fig.~\ref{fig:system_model}. 
The system consists of one source node ($S$), $M$ destination
nodes ($D$), and $L$ relay nodes ($R$) which can be used
to assist in the transmission from the $S$-node to any of the 
$D$-nodes. The $R$-nodes are connected 
to other nodes by wireless links only. 
Any message transmitted by the 
$S$-node is destined to only one $D$-node.
We assume that the message transfer from the $S$-node
to a $D$-node can go through two possible modes: (i)
a direct transmission (DT) link without the assistance
of any of the $R$-nodes, (ii) a half-duplex regenerative (i.e.~repetition coding) 
decode-and-forward (DF) link in which one of the relays 
assists in the communication between the source
and the destination.
The $S$-node represents the base station, while the $D$-nodes correspond to
the mobile users' terminals.

We assume a block-fading channel model in which the air-link
resource grid is divided in both time and frequency domains
into small blocks called
resource units (RUs)\footnote{As well-known from LTE specifications, 
the time-frequency resource grid is divided into resource 
units (RU) which are allocated flexibly to  
user terminals \cite{EkFuKaMePaToWa:2006}. 
A pair of adjacent resource units (0.5 ms each) is allocated 
to one terminal based on the scheduler decisions.}. The channel is assumed to be constant
within one RU, but varying (fading) independently 
across the RUs in the air-link grid.
The resource units are orthogonal to each other
(non-overlapping). Furthermore, we assume that
all messages are multiplexed orthogonally in all
RUs. Irrespective of their transmission mode (i.e.~DF or DT),
one or many users can receive in the same 
RU (subject to optimization).
However, if more than one user
is receiving messages in the same resource unit, 
the messages of the users are multiplexed
orthogonally by frequency division\footnote{As known 
from LTE specifications, one RU spans several
OFDM subcarriers. So, in principle,
we can let multiple users share the same RU orthogonally.
However, according to LTE specifications, 
one RU can be used by one
user only. So, it may appear that our problem formulation
is not practical to be applied in LTE systems.
However, as discussed in Section~\ref{sec:solution},
the solution of the optimization problem with the orthogonality
constraint involves that only one $D$-node receives in one channel
block (RU unit) using either DT or DF.}. A DF link is divided 
into two sub-units occupying
the same frequency band but having orthogonal time division
multiplexing.  
Fig.~\ref{fig:RU} shows how
the air-link resource grid is divided,
and an example of how the users may be scheduled
across the RUs. 
One of the possibilities is that two independent messages
are sent to the same $D$-node in the same RU, 
but with different transmission modes;
i.e.~one message is sent through a DT link
and the other one using a DF link.
Several other possibilities are illustrated in Fig.~\ref{fig:RU}.

\begin{figure}[htb]
 \begin{center}
   \includegraphics[scale=0.4]{./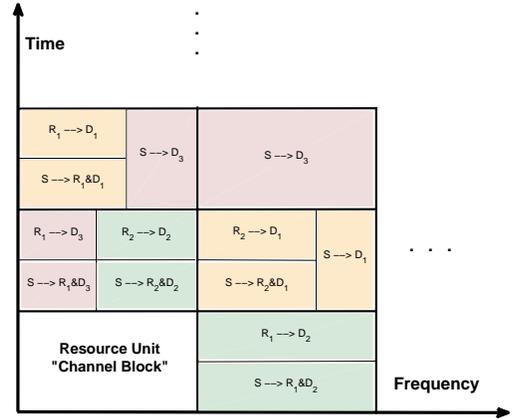}
\caption{Dividing the total air-link resource grid into small blocks
called resource units, which are assumed to be flat faded (i.e.~the 
bandwidth of a RU and its time duration are smaller than the coherence
bandwidth and the coherence time of the fading channels).
Within one RU, one or more users can receive a message from the source
node. All transmissions are orthogonal to each other. Based on the channel 
conditions, the messages and their associated transmission modes can be scheduled 
flexibly to the users terminals.}
   \label{fig:RU}
 \end{center}
\end{figure}

Additionally, we assume that the channel realizations (blocks) are 
large enough (i.e.~slow fading assumption) so that each codeword
destined to any user 
can be transmitted over one channel block with close-to-capacity
limits rate.

\subsection{Problem Formulation}
\label{sec:problem_definition}
In this work, we provide answers to several fundamental 
questions\footnote{Some of these questions may have been discussed in
the literature for other channel models or problem setup. We are interested here in
the optimal resource allocation scheme within the assumptions and
channel model adopted in this paper.} concerning the resource allocation schemes 
to be applied in the system:
\begin{enumerate}
\item When (i.e.~under what conditions) 
is a DT link optimal? When is a DF
link optimal? When is the orthogonal multiplexing 
of both modes optimal?
\item How should we select the relay that is most capable of
supporting the transmission to a $D$-node? What is the optimal policy 
to allocate the power for the $S$-node and the $R$-node over a DF link?
\item How should the scheduler process the CSI to
determine the number of messages to be transmitted in a RU and
their transmission modes? What is the optimal
policy to allocate the power and the channel ratios for every
message?
\end{enumerate}

Furthermore, in a flexible system where the users can have  
different requested rates dependent on, for example, the
supported services, the resource allocation schemes
should be flexibly adjustable to operate at any of the possible
operating points of the system. In this paper, we solve the 
resource allocation problem to operate at any pre-determined
point on the achievable rates region. 
The achievable rates region is defined as the set of 
all long-term average rate 
vectors\footnote{We use boldface to indicate vectors.}
$\mathbf{\bar R}=[\bar R_1 \; \bar R_2 \cdots \bar R_M]$ achieved by the nodes
$D_1, D_2, \cdots, D_M$ such that the long-term average 
sum (of all nodes) power density constraint
$\bar{P}$ is not exceeded.  
To simplify the mathematical formulation of the problem,
we assume that $R$ and $P$ represent spectral densities
(i.e.~bits/sec/Hz and Joul/sec/Hz respectively).

The optimum points within the achievable rates region 
are those that are located on the
boundary surface.  The latter can be characterized as the
closure of the parametrically defined surface
\begin{equation}
\label{eq:surface}
\big\{ \mathbf{\bar R}(\pmb{\mu}):\pmb{\mu}\in\mathcal{R}^{M}_{+},
\sum_i\mu_i=1\big\}
\end{equation}
where for every weighting-factors 
vector $\pmb{\mu}$,
the rate vector $\mathbf{\bar R}(\pmb{\mu})$ can be obtained
by solving the optimization problem: 
\begin{equation}
\label{eq:problem}
\max \sum^{M}_{i=1}\mu_i \bar R_i, \quad 
\mbox{subject to }
\frac{1}{K}\sum^{K}_{k=1}P[k]
\leq \bar{P}
\end{equation}
where $K\rightarrow\infty$ is the total number of channel blocks, and
$M$ is the number of active users.
$\bar R_i$ is the long-term average (i.e.~averaged over all channel blocks) rate of user $i$.
The relationship between $R$ and $P$ will be obtained in Section~\ref{sec:solution}.
For simplicity, we use a single index $k$ to refer to a RU
although the resource grid is divided in both time and frequency domains.
We assume without-loss-of-generality that all channel blocks
(RUs) have identical frequency bandwidth and time duration.
$P_i[k]$ and $P[k]$ are respectively the sum (of the source node
and the relay node) power density (Joul/sec/Hz) used specifically to transmit to node
$D_i$, and the sum power density (Joul/sec/Hz) transmitted (including all receiving nodes) 
during channel block $k$.

All power and resource allocation polices proposed in this paper are presented as functions
of the weighting factors vector because it defines
the specific operating point of the system.
The selection of $\pmb{\mu}$ to meet the constraints of the provided
services (applications) is a different topic that is not discussed 
in this work\footnote{As discussed in \cite{ShGo3:2008} and \cite{ShGoMc:2008},
there is no contradiction between the two objectives of (i) efficient resource
allocation by designing resource allocation schemes
leading to operating at the points on the boundary of the achievable rates
region, and (ii) achieving fairness among the users
as well as maintaining the QoS requirements, which can be done
by controlling the operating point of the system based on proper selection 
of $\pmb{\mu}$.}.
Few examples of the many possible
approaches suggested in the literature to select the specific operating 
point of the system are (i) the fairness-based approach, such as 
the proportional fairness scheduler \cite{ViTsLa:02} and 
the flexible resource-sharing constraints scheduler \cite{ShGo4:2008}, 
(ii) the utility-maximization-based approach \cite{SoLi3:2005}, and
(iii) the Quality-of-Service (QoS) constraints based 
approach \cite{WaGiMa:2007}\cite{WaGi2:2007}.

\subsection{Mathematical Notation}
The notation $R_i[k]$ means the achievable rate (bits/sec/Hz) at
node $D_i$ during one RU, which has index $k$.
In our mathematical notations, we use superscripts to differentiate
between DT and DF links. Similarly, we use subscripts to indicate
the nodes involved.
We denote the relay node associated with destination node
$D_i$ as $R_{l(i)}$.
Furthermore, we denote the channel access ratios (i.e.~bandwidth ratio) of user $D_i$
within channel block $k$ as $\tau_i^{DT}[k]$ for the DT link,
and $\tau_i^{DF}[k]$ for the DF link.
It should be clear that $\sum_{i=1}^M(\tau_i^{DT}[k]+\tau_i^{DF}[k])=1$.  
The channel gain between two nodes is represented by $h$. For
example, $h_{sd_i}$ is the channel quality between the source node and 
the destination node ($D_i$).
Fig.~\ref{fig:system_model2}
shows an example of one RU (with index $k$) with a DF link to one
node $D_i$ as well as a DT link to another node $D_j$ multiplexed orthogonally in 
the frequency domain.
The relations between the transmitted signals ($x_i^{DF}[k]$, $x_j^{DT}[k]$) and the
signals ($y[k]$) received by the destination and relay node associated
with node $D_i$ in each of the orthogonal sub-channels within
the resource unit $k$ are shown in Fig.~\ref{fig:system_model2}. 
Here, $z_{d_j}^{DT}[k]$ and $z_{r_{l(i)}}^{DF_1}[k]$ represent
the additive zero-mean white circular complex Gaussian noise
with variance $\sigma^2$ at $D_j$ and $R_{l(i)}$ respectively.

\begin{figure}[htb]
 \begin{center}
   \includegraphics[scale=0.39]{./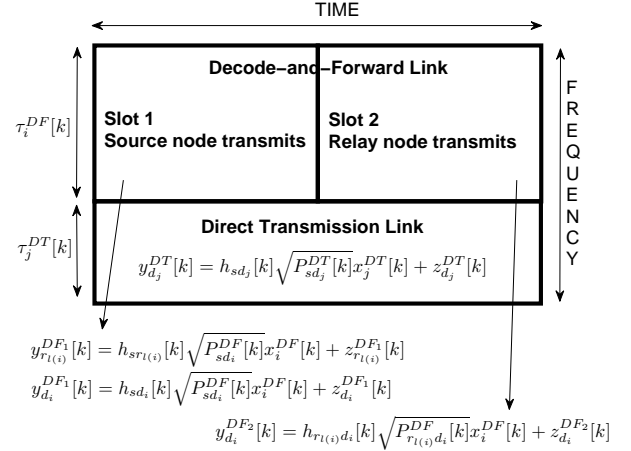}
\caption{An example of the orthogonal division of one 
resource unit $k$ to transmit to users $D_i$ (using
DF mode with assistance of node $R_{l(i)}$) and $D_j$ (using
DT mode). The received signals at the associated
nodes are shown as functions of the transmitted
signals.}
   \label{fig:system_model2}
 \end{center}
\end{figure}

We use the notation $\gamma = \frac{|h|^2}{N_o}$
for the effective power gain of a given channel, where
$N_o$ is the noise power spectral density.
We assume that the fading processes of the channel gains ($\gamma_{sd_i}$, 
$\gamma_{sr_{l(i)}}$, $\gamma_{r_{l(i)}d_i}$) are independent of each other, 
stationary and have continuous probability density functions, $f_\gamma(x)$.
In the numerical examples throughout the paper, 
we assume the fading processes have
Rayleigh\footnote{$f_{\gamma}^{\text{Rayleigh}}(x)=
\frac{1}{\bar{\gamma}}\exp
\left(\frac{-x}{\bar{\gamma}}\right)$, $x>0$, 
$\bar{\gamma}$ is the average effective power gain of the channel.} 
or Rice \footnote{$f_{\gamma}^{\text{Rice}}(x)=\frac{\kappa+1}{\bar{\gamma}}
\exp\left(-\kappa-\frac{\kappa+1}{\bar{\gamma}}x\right)
I_o\left(2\sqrt{\frac{\kappa(\kappa+1)x}{\bar{\gamma}}}\right)$, $x>0$, 
$\bar{\gamma}$ is the average effective power gain of the channel, 
and $\kappa$ is the ratio 
of the power received through the line-of-sight path 
to the power received through the non-light-of-sight path.} distributions \cite{Pro:95}.

\section{Solution Structure}
\label{sec:solution}

Looking back at our optimization problem (\ref{eq:problem}), the
achievable rate $R_i[k]$ is the sum of the rates achieved by the
DT and DF links multiplied by their relative channel access ratios:
\begin{equation}
\tau_i[k]R_i[k]=\tau_i^{DF}[k]R_i^{DF}[k]+\tau_i^{DT}[k]R_i^{DT}[k]
\end{equation}

The maximum possible achievable rate (bits/sec/Hz)
of user $i$ through DT link is given by:
\begin{equation}
\label{eq:rateDT}
R_i^{DT}[k]=\log\left(1+\gamma_{sd_i}[k]P_{sd_i}^{DT}[k]\right)
\end{equation}
for additive white Gaussian receiver noise, where (\ref{eq:rateDT}) is the
Shannon capacity for the AWGN channel. With adaptive modulation and coding,
a rate close to capacity can be achieved (e.g. \cite{Go:2005}).
In practice,
wireless systems support a set of discrete rate values rather than
a continuous range.  However, we use the idealization (\ref{eq:rateDT}) to
relate ``power'' and ``rate''; the relative
performances\footnote{The ``absolute'' performance of a combination of
specific modulation and coding schemes can often be approximated by
(\ref{eq:rateDT}) as well. An ``acceptable'' residual bit or frame
error rate will often be achieved by a practical scheme with some
(fairly constant) power-offset against the theoretical ``zero-error''
curve given by (\ref{eq:rateDT}).} 
will carry over into practice.

The achievable rate by regenerative decode-and-forward relaying
is known from the literature (e.g.~\cite{NiTsWo:2004}):
\begin{align}
& R_i^{DF}[k]= \nonumber \\
& \min\left\{
   \begin{array}{l}
        \frac{1}{2}\log\left(1+\gamma_{sr_{l(i)}}[k]P_{sd_i}^{DF}[k]\right), \\
        \frac{1}{2}\log\left(1+\gamma_{sd_i}[k]P_{sd_i}^{DF}[k]+
\gamma_{r_{l(i)}d_i}[k]P_{r_{l(i)}d_i}^{DF}[k]\right)
    \end{array} 
 \right. \label{eq:onerelay}
\end{align}

The first term in (\ref{eq:onerelay}) is the achievable rate by the
relay node in the first time slot. Since, in RDF, the relay has to decode the
signal in order to retransmit it in the second time slot, the achievable
rate by the destination node is upper bounded by the ``decodable'' rate limit
at the relay. The $1/2$ factor which is multiplying the log function
is due to the fact that the information is conveyed to the 
relay in half the total time assigned for transmitting the signal to 
the $D$-node. 
The second term in (\ref{eq:onerelay})
is the maximum achievable rate by the $D$-node using Maximal-Ratio-Combining (MRC),
e.g.~\cite{TsVi:05}, of the signal received from the source 
in the first slot, and the signal
received from the relay in the second time slot.

In the sequel, we assume a single relay selection per-user and per-resource
unit in RDF transmission mode. This is appropriate from a practical point
of view especially that we assume partial CSI (no phase information) 
available at the transmitters.
In Appendix~\ref{sec:appendix} we extend the solution to the case of
multiple-relay selection per-user which requires full CSI (i.e.~including
phase information) in order to enable coherent (in phase) transmission
of multiple relays.

The set of optimization variables for the problem (\ref{eq:problem})
includes the power allocated to each node as well as the corresponding
channel access ratios for every transmission mode in every channel block $k$.
We need first to characterize the achievable rate
by RDF links in order to solve (\ref{eq:problem}).

\subsection{Achievable Rate by Regenerative Decode and Forward}
\label{sec:RDF}
Our objective here is to maximize $R_i^{DF}[k]$  
(\ref{eq:onerelay}) with respect to the
sum power allocated to the DF link:
\begin{equation}
\label{eq:totalpowerDF} 
P_i^{DF}[k]=\frac{1}{2}P_{sd_i}^{DF}[k]+\frac{1}{2}P_{r_{l(i)}d_i}^{DF}[k]
\end{equation}
The $1/2$ factors in (\ref{eq:totalpowerDF}) are used because we present
$P$ as (Joul/sec/Hz), and the source and relay nodes are transmitting
in half the total time. 
For simplicity, we will drop the user index $i$ as well as the channel
block index $k$ in our mathematical formulations in this sub-section.

We first define when the DF link can be useful. By the 
notion ``useful DF link'' we mean that it could (for some power range)  
support higher data rate than a DT link given that both transmission modes 
are allocated the same total power. 

\begin{proposition} 
If $\gamma_{sr}$ or $\gamma_{rd}$ is less than
$\gamma_{sd}$, then transmitting using a DT link supports higher data rates
than using a DF 
link, which means that the DF link is NOT useful in this case. 
\end{proposition}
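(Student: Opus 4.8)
The plan is to fix a common total power budget $P$ for the two modes and show that $R^{DT}\ge R^{DF}$ for \emph{every} admissible DF power split, so that even the rate-maximizing split cannot make relaying useful. Under this budget the direct link delivers $R^{DT}=\log(1+\gamma_{sd}P)$, while the DF sum-power constraint $\frac{1}{2}P_{sd}^{DF}+\frac{1}{2}P_{rd}^{DF}=P$ becomes $P_{sd}^{DF}+P_{rd}^{DF}=2P$. The central idea is that $R^{DF}$ is the minimum of two terms, hence dominated by \emph{either} one; I would pick the convenient term in each of the two hypotheses and absorb the half-duplex factor $\tfrac12$ into the comparison with $R^{DT}$.

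First, suppose $\gamma_{sr}<\gamma_{sd}$. Bounding $R^{DF}$ by the first term of the min and using $P_{sd}^{DF}\le 2P$ gives $R^{DF}\le\tfrac12\log(1+2\gamma_{sr}P)$. After doubling and exponentiating, the desired inequality $R^{DF}\le R^{DT}$ reduces to the polynomial bound $1+2\gamma_{sr}P\le(1+\gamma_{sd}P)^2=1+2\gamma_{sd}P+\gamma_{sd}^2P^2$, i.e.\ $2\gamma_{sr}P\le 2\gamma_{sd}P+\gamma_{sd}^2P^2$, which holds because $\gamma_{sr}\le\gamma_{sd}$.

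Second, suppose $\gamma_{rd}<\gamma_{sd}$. Now I bound $R^{DF}$ by the second (MRC) term of the min. Since $\gamma_{rd}<\gamma_{sd}$ and $P_{sd}^{DF}+P_{rd}^{DF}=2P$, we have $\gamma_{sd}P_{sd}^{DF}+\gamma_{rd}P_{rd}^{DF}\le\gamma_{sd}(P_{sd}^{DF}+P_{rd}^{DF})=2\gamma_{sd}P$, so $R^{DF}\le\tfrac12\log(1+2\gamma_{sd}P)$. Comparing with $R^{DT}$ then reduces to $1+2\gamma_{sd}P\le(1+\gamma_{sd}P)^2$, i.e.\ $0\le\gamma_{sd}^2P^2$, which always holds.

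The argument is elementary once the min is split; the only subtlety—and the conceptual heart—is that each bound must hold \emph{uniformly} over all power allocations $(P_{sd}^{DF},P_{rd}^{DF})$, so that it applies in particular to the optimal one. The factor $\tfrac12$ from half-duplex operation is precisely what forces the squared term $(1+\gamma_{sd}P)^2$ to appear and dominate, which is why a single weak hop ($\gamma_{sr}<\gamma_{sd}$ or $\gamma_{rd}<\gamma_{sd}$) already rules out any rate gain from relaying.
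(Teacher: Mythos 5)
Your proof is correct, and it is considerably more explicit than the paper's, which disposes of the proposition in two qualitative sentences: for $\gamma_{sr}<\gamma_{sd}$ it simply asserts that the relay ``can decode at rates less than the rates achievable by the DT link,'' and for $\gamma_{rd}<\gamma_{sd}$ it uses an exchange argument (replacing the relay by the source in the second slot can only help, and source repetition is implicitly no better than DT). What you do differently is make the comparison quantitative and uniform over all power splits: you bound the min by whichever term the hypothesis controls, use $P_{sd}^{DF}+P_{rd}^{DF}=2P$ to cap the relevant SNR at $2\gamma P$, and then verify $1+2\gamma P\le(1+\gamma_{sd}P)^2$. This last step is the substantive content the paper glosses over --- the relay link gets a factor-of-two power boost in each slot but pays the half-duplex prefactor $\tfrac12$, and it is exactly the cross term $2\gamma_{sd}P$ plus the square $\gamma_{sd}^2P^2$ that absorbs the boost. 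Your route is therefore self-contained and closes a small gap in the paper's first case (where ``the relay decodes slower than DT'' is not literally true without accounting for the doubled slot power), at the cost of being longer; the paper's argument buys brevity and the physical intuition of the exchange, but relies on the reader to supply the concavity comparison you carry out explicitly.
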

\begin{proof}
If $\gamma_{sr}<\gamma_{sd}$, then the relay node can decode at rates
less than the rates achievable by the DT link. Similarly, if $\gamma_{rd}<\gamma_{sd}$,
then allocating the power in the second slot to the source node to retransmit the 
codeword is better than allocating the power to the relay node.
\end{proof}

The next step is to solve the problem: 
\begin{subequations}\label{eq:RDF_opt}
\begin{align}
& \max_{P_s^{DF},P_r^{DF}} 
R^{DF}=\min\left\{
   \begin{array}{l}
        \frac{1}{2}\log\left(1+\gamma_{sr}P_s^{DF}\right), \\
        \frac{1}{2}\log\left(1+\gamma_{sd}P_s^{DF}+\gamma_{rd}P_r^{DF}\right)
    \end{array} 
 \right. \label{eq:RDF}\\
\nonumber&\mbox{subject to }\\
& \gamma_{sr}>\gamma_{sd}, \; \gamma_{rd}>\gamma_{sd}, \; \left(\frac{1}{2}P_s^{DF}+\frac{1}{2}P_r^{DF}\right)= P^{DF}.
\end{align}
\end{subequations}
 
\begin{proposition} 
The optimal allocation of the source power and the relay power over
a useful DF link can be obtained by making the two terms in (\ref{eq:RDF}) equal
each other. 
\end{proposition}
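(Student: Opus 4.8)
The plan is to use the sum-power constraint to eliminate one variable, reducing (\ref{eq:RDF_opt}) to the maximization of the minimum of two single-variable functions, and then to finish with a monotonicity argument. Concretely, I would substitute $P_r^{DF}=2P^{DF}-P_s^{DF}$ (from the equality constraint $\frac{1}{2}P_s^{DF}+\frac{1}{2}P_r^{DF}=P^{DF}$), so that the objective depends only on $P_s^{DF}\in[0,2P^{DF}]$. Writing the two terms inside the $\min$ as $f_1(P_s^{DF})=\frac{1}{2}\log(1+\gamma_{sr}P_s^{DF})$ and $f_2(P_s^{DF})=\frac{1}{2}\log\!\big(1+2\gamma_{rd}P^{DF}+(\gamma_{sd}-\gamma_{rd})P_s^{DF}\big)$, the goal is to show that $\max_{P_s^{DF}}\min\{f_1,f_2\}$ is attained exactly where $f_1=f_2$.

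The key observation is the opposite monotonicity of the two terms, and this is precisely where the ``useful DF link'' hypotheses enter. The first term $f_1$ is strictly increasing in $P_s^{DF}$ because $\gamma_{sr}>0$. For the second term, the coefficient of $P_s^{DF}$ inside the logarithm is $(\gamma_{sd}-\gamma_{rd})$, which is strictly negative exactly because $\gamma_{rd}>\gamma_{sd}$; hence $f_2$ is strictly decreasing in $P_s^{DF}$. Intuitively, giving the source more of the total power relaxes the decode constraint at the relay ($f_1$) while tightening the MRC rate achievable at the destination ($f_2$).

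It then suffices to locate the crossing of the two branches and invoke the shape of the $\min$. At $P_s^{DF}=0$ we have $f_1=0\le f_2$, while at $P_s^{DF}=2P^{DF}$ the remaining condition $\gamma_{sr}>\gamma_{sd}$ gives $f_1=\frac{1}{2}\log(1+2\gamma_{sr}P^{DF})>\frac{1}{2}\log(1+2\gamma_{sd}P^{DF})=f_2$. Since $f_1$ is continuous and increasing and $f_2$ is continuous and decreasing, the Intermediate Value Theorem yields a unique point $P_s^{\star}$ with $f_1(P_s^{\star})=f_2(P_s^{\star})$; on $[0,P_s^{\star}]$ the minimum equals the increasing branch $f_1$ and on $[P_s^{\star},2P^{DF}]$ it equals the decreasing branch $f_2$, so $\min\{f_1,f_2\}$ rises to a peak at $P_s^{\star}$ and then falls, proving the maximum is attained when the two terms are equal. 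I do not anticipate a genuine obstacle here; the only point needing care is checking that the hypotheses $\gamma_{rd}>\gamma_{sd}$ and $\gamma_{sr}>\gamma_{sd}$ are exactly what force, respectively, the strict decrease of $f_2$ and the endpoint inequality $f_1>f_2$ at $P_s^{DF}=2P^{DF}$, since without them the branches need not cross in the feasible interval and the equal-rate allocation could fail to be optimal (consistent with Proposition~1, under which the DF link is not useful).
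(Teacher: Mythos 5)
Your proof is correct and follows essentially the same route as the paper: exploit the sum-power constraint to view both terms as functions of $P_s^{DF}$ alone, note the first is increasing and the second decreasing (the latter precisely because $\gamma_{rd}>\gamma_{sd}$), and conclude the max-min occurs at the crossing. Your endpoint checks at $P_s^{DF}=0$ and $P_s^{DF}=2P^{DF}$ (using $\gamma_{sr}>\gamma_{sd}$) are a worthwhile addition the paper leaves implicit, since they guarantee the crossing actually lies in the feasible interval.
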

\begin{proof}
The first term in (\ref{eq:RDF}) (i.e.~$\frac{1}{2}\log\left(1+\gamma_{sr}P_s^{DF}\right)$)
is a monotonically increasing function of $P_s^{DF}$.
On the other hand, the second term in (\ref{eq:RDF})
(i.e.~$\frac{1}{2}\log\left(1+\gamma_{sd}P_s^{DF}+\gamma_{rd}P_r^{DF}\right)$)
is a monotonically decreasing function of $P_s^{DF}$ because
$\gamma_{rd}>\gamma_{sd}$, and the sum of $P_s^{DF}$ and
$P_r^{DF}$ is constant (equals $2P^{DF}$). Thus, to maximize the minimum of the
two terms in (\ref{eq:RDF}), we should make them equal.
\end{proof}

Hence, the power allocation over a useful DF link should be as follows: 
\begin{equation}
\label{eq:PsPrDF}
P_s^{DF}=\frac{2P^{DF}}{1+\frac{\acute{\gamma_{sr}}-1}{\acute{\gamma_{rd}}}},
\;
P_r^{DF}=2P^{DF}-P_s^{DF}
\end{equation}
where $\acute{\gamma_{sr}}$ and $\acute{\gamma_{rd}}$ in (\ref{eq:PsPrDF}) 
are defined as: 
\begin{equation}
\label{eq:define_acute}
\acute{\gamma_{sr}} \doteq \frac{\gamma_{sr}}{\gamma_{sd}}, \;
\acute{\gamma_{rd}} \doteq \frac{\gamma_{rd}}{\gamma_{sd}}
\end{equation}

Furthermore, the achievable rate over a useful DF link
(i.e.~$\acute{\gamma_{sr}}>1$
and $\acute{\gamma_{rd}}>1$)
is: 
\begin{equation}
\label{eq:rateRDFlink}
R^{DF}(P^{DF})=\frac{1}{2}\log\left(1+2\alpha\gamma_{sd}P^{DF}\right)
\end{equation}
where
\begin{equation}
\label{eq:alpha}
\alpha=\frac{\acute{\gamma_{sr}}\acute{\gamma_{rd}}}
{\acute{\gamma_{sr}}+\acute{\gamma_{rd}}-1}
\end{equation}

The parameter $\alpha$ is actually the power gain that
the RDF link is capable of providing. It should be clear
from (\ref{eq:alpha}) that if a relay is useful, then $\alpha>1$.
The result in (\ref{eq:rateRDFlink}) is interesting because it
fits with the expectation that a half-duplex relaying strategy 
provides a power gain (i.e.~beamforming gain) at the cost
of a loss (by half) in the degrees-of-freedom\footnote{Readers who are 
not familiar with the notions of degrees-of-freedom, bandwidth-limited
and power-limited  regions of the AWGN channel capacity formula 
can refer to \cite[Chapter~5]{TsVi:05}.} 
(i.e.~multiplexing gain). 
The beamforming gain is obtained because the receiver decodes
the signal at higher Signal-to-noise-ratio (SNR) although
the same total power is used for the transmission.
The loss in the multiplexing gain is a result of transmitting the message
over two time slots.
Thus, RDF, and all 
half-duplex relaying strategies in general, can be useful 
for terminals which are operating in the power limited region 
of the channel capacity where the achievable rate (bits/sec) 
has almost a linear 
relation with SNR. This means that RDF can support higher rates than a DT link 
only for users with low SNR (i.e. close or at the cell edge). 
On the other hand, terminals operating in the bandwidth-limited region
of the channel capacity (i.e.~at high SNR) will be 
affected by the loss in degrees-of-freedom of RDF, 
and hence a DT link can support higher data
rate in this case.
Actually, it can be shown that other half-duplex
relaying strategies such as amplify-and-forward (AF) 
have achievable rate performance
similar to (\ref{eq:rateRDFlink}), but with more complicated
formulas to express the power gain $\alpha$ in terms of the
channel qualities between the source, relay and destination nodes.
RDF has the nice property that $\alpha$ is a function of the 
channel qualities only and independent of the total power allocated 
to the link. This 
is not the case with other relaying strategies such as AF.  

Now, we can define the criteria to select the unique relay that
should be selected in the RDF link. It is the one that
provides the best power gain in order to maximize
the total achieved rate at the destination node given
the total power allocated to the DF link.
The best relay to be associated with node $D_i$ is $R_{l(i)}$
where $l(i)$ is defined as:
\begin{equation}
\label{eq:selectrelay}
l(i) = \arg \max_j\alpha_{r_jd_i} \quad :\acute{\gamma_{sr_j}}>1
\text{ and } \acute{\gamma_{r_jd_i}}>1
\end{equation}

In the remaining part of this paper, we will use 
(\ref{eq:rateRDFlink}) to characterize the performance
of RDF links. We assume that proper relay selection 
is used (\ref{eq:selectrelay}), and based on
the allocated power to the RDF link ($P^{DF}$),
the source and relay power should be allocated according
to (\ref{eq:PsPrDF}).
For simplicity, we will drop the index of the relay
and just use $\alpha_i[k]$ to indicate the best power gain
for the RDF link. 
If none of the relays is useful, we will use the DT link
only for transmission. Otherwise, an orthogonal multiplexing
between the DT link and the DF link should be 
used (subject to optimization). 

\subsection{Optimal Resource Allocation -- Broadcast Channel with $2M$ Virtual Users} 
\label{sec:optimal}
The results in Section~\ref{sec:RDF} leads to a very interesting
consequence that we can transform our problem into 
a broadcast (with no relays) channel with 
each actual user replaced by two virtual users  
having different channel qualities and multiplexing weights, 
corresponding to the two different
transmission modes (i.e.~DT or DF) of the actual user. 
The only difference between our problem and the original
broadcast channel is that we will have two classes of users
in terms of the relation between the achievable rate and the allocated power.
However, this does not change the structure of the problem solution
and the interesting closed-form solution of it \cite{LiGo1:2001}.
The achievable rates multiplied by the corresponding multiplexing factor 
of the $2M$ virtual users\footnote{If some users do not have useful
relay links, then the total number of virtual users will be less than $2M$.} 
will have
the form
\begin{equation}
\label{eq:fj}
f_j(P_j)=\omega_j\log(1+\eta_jP_j) 
\end{equation}
where the virtual user $j$ that is related to one of the
transmission modes of the actual user $i$ is characterized by:
\begin{equation}
\label{eq:omegaeta}
\omega_j=\left\{\begin{array}{ll}
	\mu_i   & \text{if DT}\\
	\frac{\mu_i}{2}  & \text{if DF}
	\end{array}
\right.
,
\quad
\eta_j=\left\{\begin{array}{ll}
	\gamma_{sd_i}   & \text{if DT}\\
	2\gamma_{sd_i}\alpha_i  & \text{if DF}
	\end{array}
\right.
\end{equation}
where $\mu_i$ is defined in the problem formulation \eqref{eq:problem}, 
and $\alpha$ can be obtained using \eqref{eq:alpha} and \eqref{eq:selectrelay}.
As well known \cite{LiGo1:2001},
if the cumulative-density-function (CDF) of the fading process is a continuous
function (this is true in reality such as in Rayleigh and Rice
fading conditions), then \emph{the optimal resource allocation policy is unique
and at most a single user only is scheduled
in each fading state (i.e.~channel block).
Furthermore, the power allocated to the scheduled user follows a water-filling
approach.}

Thus, every RU should be allocated to only one user with only one transmission 
mode (i.e.~either DF or DT and not both).
The index $m$ of the only virtual user that should be scheduled in channel block
$k$ (i.e.~$\tau_l[k]=1$ if $l=m$ and $\tau_l[k]=0$ if $l\neq m$, $l=1,2,\cdots,2M$) is:   
\begin{equation}
\label{eq:functionm}
m = \arg \max_j \left(f_j(P_j[k])-\lambda_G P_j[k] \right)
\end{equation}
where $f_j$ is defined in \eqref{eq:fj}, and 
$\lambda_G$ is the ``power price''\footnote{The optimal 
value of $\lambda_G$ in order to achieve a desired average power level
is dependent on the channel statistics. One option is to control
$\lambda_G$ in real-time based on actual channel measurements \cite{WaGiMa:2007}.} which should be controlled to 
maintain the average power constraint $\bar{P}$. $P_j[k]$ in (\ref{eq:functionm})
is dependent on the transmission mode of the corresponding actual user 
$i$ related to the virtual user $j$:
\begin{equation}
\label{eq:powerm}
P_j[k]=	\left[\frac{\omega_j}{\lambda_G}-\frac{1}{\eta_j[k]}\right]^+ 
\end{equation}
where ($x^+=\max(x,0)$).
Once we obtain $m$ according to (\ref{eq:functionm}), we can determine
the actual user that should be scheduled and the optimal transmission mode.

Expressions \eqref{eq:selectrelay}, \eqref{eq:functionm}, 
\eqref{eq:powerm} and \eqref{eq:PsPrDF} provide valuable closed-form policies
to select the best relay, schedule the users across the RUs, choose the optimal transmission 
mode and control the transmission power.


\subsection{Optimal Resource Allocation with Constant Power 
per Channel Block} 
\label{sec:suboptimal}
If the resource allocation is optimized for each channel block 
independently from other channel blocks and irrespective 
of the scheduled users or selected transmission mode in each 
channel block, the corresponding optimization problem is:
\begin{subequations}\label{eq:problem2}
\begin{align}
& \max_{\{\tau_j[k],P_j[k],j=1,\cdots,2M\}}\sum^{2M}_{j=1}\tau_j[k]f_j(P_j[k]),\\
\nonumber&\mbox{subject to }\\
& \sum^{2M}_{j=1}\tau_j[k]P_j[k] \leq P[k].
\end{align}
\end{subequations}
where $M$ is the number of active users, and 
$f_j$ is the achievable rate multiplied by the weighting index as defined in \eqref{eq:fj}.
Note that since we assumed in our problem definition in Section~\ref{sec:problem_definition} that
$R$ and $P$ represent spectral densities (i.e. their units are bits/sec/Hz and
Joul/sec/Hz respectively), the actual short-term average rate and power 
(averaged within one channel block $k$) should be obtained by multiplying with 
the corresponding channel access ratio (within channel block $k$) $\tau_j[k]$ 
for each virtual user $j$.

The solution of \eqref{eq:problem2} can be obtained by applying the concept of virtual users over 
a no-relaying broadcast channel that is 
discussed in Section~\ref{sec:optimal}.
The solution of the problem \cite{LiGo1:2001} is that either one 
or two virtual users (could be related to the same actual user or 
to two different actual users) are scheduled within the channel block $k$.

Depending on the value of $P[k]$, the solution of the general case
of problem (\ref{eq:problem2}) is either one or two elements in $\mathbf{P}$
and $\mathbf{\tau}$ are non-zero.
The domain $[0,\infty)$ of the functions $f_j$, $j=1,\cdots,2M$ 
will be divided into adjacent intervals which are
alternating between (i) intervals in which the solution of the optimization
problem is that the optimization variables of only one function (i.e.~virtual user)
should be non-zero, given that $P[k]$ belongs to the corresponding interval,
and (ii) intervals in which the optimization variables of two functions
are non-zero, with the values of the $P$ variables of these two functions identical to 
the values of the two end-points of the corresponding interval, and 
the $\tau$ variables for
these two functions dependent on the relative location of $P[k]$
with respect to the interval end-points.  
An algorithm was provided in 
\cite[pp. 1089, first column]{LiGo1:2001} to obtain the solution.

Fig.~\ref{fig:optimizationproblem2} provides an illustration of the 
solution to problem (\ref{eq:problem2}). The example presented in 
Fig.~\ref{fig:optimizationproblem2} is for three virtual users (two actual users 
with one of them having no useful DF link).

\begin{figure}[htb]
 \begin{center}
    \includegraphics[scale=0.38]{./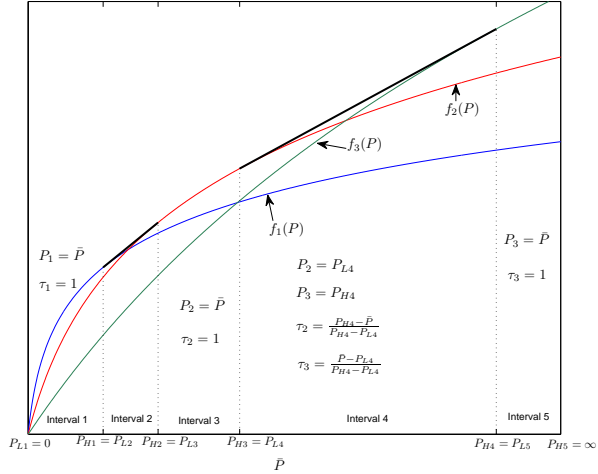}
\caption{An illustrative example of problem (\ref{eq:problem2}) 
with three virtual users.
The dependence of the solution on the specific interval, to which
$\bar{P}$ belongs, is illustrated.}
   \label{fig:optimizationproblem2}
 \end{center}
\end{figure}

In addition to the optimal solution, we suggest also a sub-optimal
solution which is simple and
close to optimality. The near-optimal solution is: 
\begin{equation}
\label{eq:suboptimalsolution2}
\tau_m=1, \; P_m=P[k], \; \text{where } \; m=\arg\max_i f_i(P[k])
\end{equation}

This near-optimal solution is identical to the optimal solution when
only one virtual user is optimizing the problem. However, if two virtual users 
are involved in the optimal solution, the sub-optimal solution
still gives near-optimal performance because the difference
between the tangent line between the two functions, which optimize the problem,
and the maximum of the two functions is usually very small.

\section{Numerical Examples}
\label{sec:numerical}

\begin{figure}[htb]
 \begin{center}
   \includegraphics[scale=0.47]{./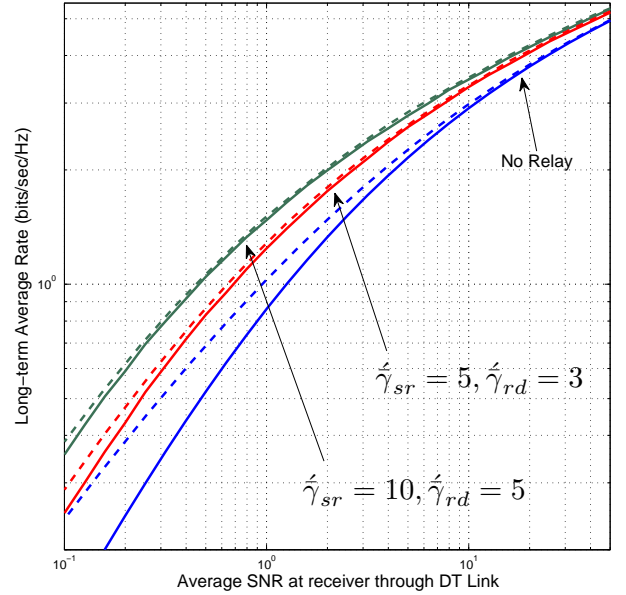}
\caption{Comparisons of the achievable rates using 
(i) direct transmission only,
(ii) orthogonal multiplexing of RDF and DT with
optimal power allocation.
The solid lines correspond to the case of constant total power 
per channel realization, and the dashed lines correspond
to the case of optimal power allocation over all channel realizations.}
   \label{fig:sim_single}
 \end{center}
\end{figure}

We provide in Fig.~\ref{fig:sim_single} numerical comparisons of the
maximum achievable rates in a single-user case with and 
without applying the optimal
power allocation strategy studied in this paper.
The simulation results
were obtained for the scenario when the power gain over
the source-destination link is Rayleigh faded,
while the power gain over the source-relay and
relay-destination is Rician faded with better average channel
qualities than over the direct link. 
Such scenario is relevant when there is no line-of-sight (LOS)
path between the source and the destination nodes, while
the relay node is in a position where it has LOS
paths to both of them.
The simulation results in Fig.~\ref{fig:sim_single}
were done for two cases: 
(i) $\frac{\bar\gamma_{sr}}{\bar\gamma_{sd}}=5$,
$\frac{\bar{\gamma_{rd}}}{\bar{\gamma_{sd}}}=3$, and
(ii) $\frac{\bar\gamma_{sr}}{\bar\gamma_{sd}}=10$,
$\frac{\bar\gamma_{rd}}{\bar\gamma_{sd}}=5$.  
In the simulations, the used values for the ratios 
between the power of the LOS path
over the non LOS paths in the Rician faded channels
are: $\kappa_{sr}=10$ and $\kappa_{rd}=5$.

Fig.~\ref{fig:sim_single} shows that potential gains can be
obtained with the assistance of the relay.
The gain in the achievable rate is higher at lower
SNR, but still useful at mid to high SNR. 
Furthermore, we observe that using optimal power allocation
over all channel realizations has a small gain over
constant power allocation per channel block, especially
at higher SNR.

\begin{figure}[htb]
 \begin{center}
   \includegraphics[scale=0.47]{./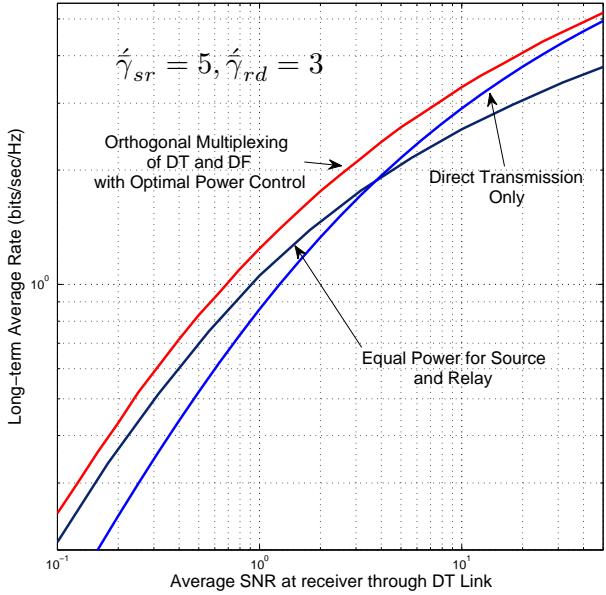}
\caption{Comparisons of the maximum achievable rates with relay assistance 
for two cases: 
(i) optimal power allocation,
(ii) equal power allocation for the source and the relay node.}
   \label{fig:sim_single_b}
 \end{center}
\end{figure}

Another important observation is that with equal power
allocation between the source and the relay nodes,
which is shown in Fig.~\ref{fig:sim_single_b},
there is a significant degradation
from the achievable rates with the assistance of the relay.
Furthermore, it is worse than just using direct transmission
when the SNR is mid to high range.
This demonstrates the advantage of the proposed power
allocation strategy because it switches between
DF and DT based on the channel conditions, and thus
it is always better than just DT over all SNR regions.

\begin{figure}[htb]
 \begin{center}
   \includegraphics[scale=0.39]{./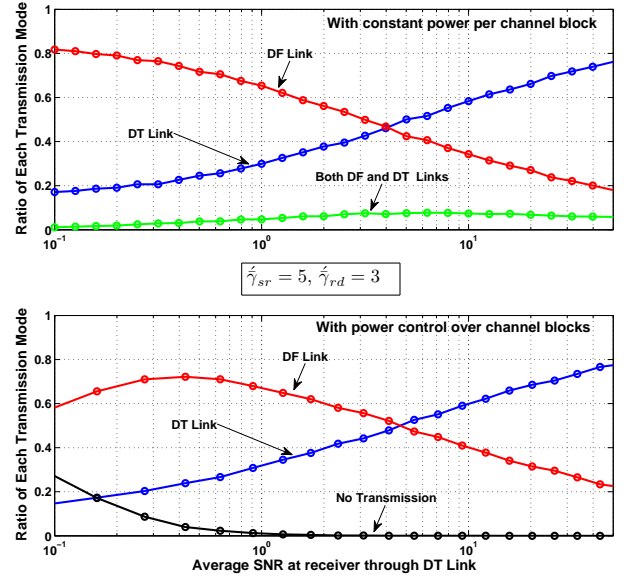}
\caption{Ratio of each transmission mode for the case
of ($\frac{\bar\gamma_{sr}}{\bar\gamma_{sd}}=5$,
$\frac{\bar{\gamma_{rd}}}{\bar{\gamma_{sd}}}=3$) in
Fig.~\ref{fig:sim_single}.}
   \label{fig:mode_percentage}
 \end{center}
\end{figure}

The results in Fig.~\ref{fig:sim_single} and Fig.~\ref{fig:sim_single_b}
were obtained
assuming large number of channel blocks (realizations) in order to obtain
good estimate of the average performance of the system.
Fig.~\ref{fig:mode_percentage} shows the ratio
of RUs -- with respect to the total number
of RUs used in the simulation -- 
which have a relay link, a direct transmission link,
or none for the  
case of ($\frac{\bar\gamma_{sr}}{\bar\gamma_{sd}}=5$,
$\frac{\bar{\gamma_{rd}}}{\bar{\gamma_{sd}}}=3$) in
Fig.~\ref{fig:sim_single}.
The results in Fig.~\ref{fig:mode_percentage} include
the global power control case as well as the constant
power per channel block case. 
It is demonstrated that the DF link is more
used at low SNR and decreases gradually as SNR improves,
while at high SNR, the DT link
is more capable of providing higher throughput
to the D-node. The ``no transmission'' case
appears when the channel condition is so bad that
the optimal power control strategy is not to transmit during
such deep fading conditions. It is obvious that such case
appears more frequently at low SNR.

\begin{figure}[htb]
 \begin{center}
   \includegraphics[scale=0.4]{./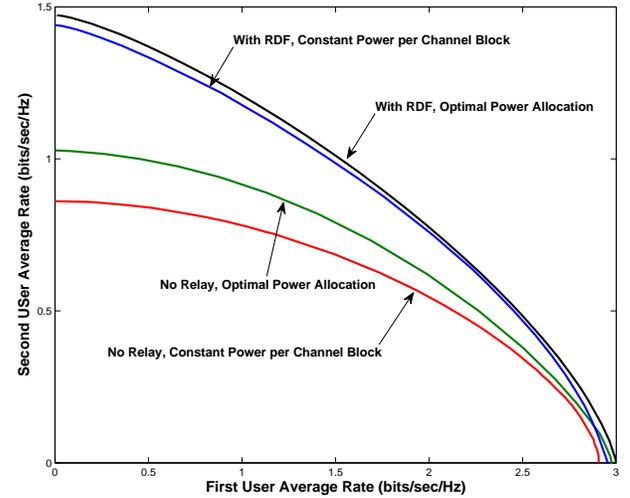}
\caption{Comparisons of the achievable rates in a two-user
case with and without the assistance of a relay, and
with and without power control over
the channel blocks. The links parameters are:
$\bar\gamma_{sr}=10$ (Rician faded with $\kappa_{sr}=10$),
$\bar\gamma_{sd_1}=10$ (Rayleigh faded),
$\bar\gamma_{rd_1}=2$ (Rician faded with $\kappa_{rd_1}=2$),
$\bar\gamma_{sd_2}=1$ (Rayleigh faded),
$\bar\gamma_{rd_2}=5$ (Rician faded with $\kappa_{rd_2}=5$).}
   \label{fig:sim_2user}
 \end{center}
\end{figure}

Fig.~\ref{fig:sim_2user} displays numerical comparisons of the
achievable rates region in a two-user case 
with and without the assistance of a relay
for the two cases of optimal
power allocation over all channel blocks,
and constant power per channel block.
The simulation results
were obtained using the channel quality parameters
shown in the Figure caption. 
The results in Fig.~\ref{fig:sim_2user} demonstrates that 
the RDF strategy improves the performance of the system,
especially for users who have
bad connection with the base station.
Furthermore, we observe that using optimal power allocation
over all channel realizations is advantageous,
especially at low SNR. However, using
constant power allocation per channel block
has good performance at higher SNR.

\section{Conclusion}
\label{sec:Conclusion}
We have addressed the problem of optimal resource allocation 
over block-fading relay-assisted broadcast channels 
under practical system constraints including using orthogonal 
multiplexing of transmitted messages, and regenerative 
decode-and-forward relaying.
We have formulated the optimization problem 
using a sum power constraint, which is a valid assumption 
in fixed-relays scenarios with no power limitations, 
and we have shown that 
this is sensible in order to achieve the maximum possible 
performance as well as to transform the problem into an equivalent 
``no-relaying'' broadcast channel optimization which has simple 
closed-form solution that is practical to be applied in real-time systems.
The optimal resource allocation strategy is to schedule 
only one user within one channel block (i.e.~resource unit) 
and to transmit using either a direct transmission mode 
without the assistance of the relays or through a relaying link. 
The relay selection is based on the maximum power gain that can 
be achieved via relaying.
The power allocated for each transmission follows 
a water-filling approach, and in case of a relaying link, 
a formula is derived to determine the power allocated to 
the source and the relay respectively.   
If constant power is allocated per channel block, the optimal
resource allocation scheme is to schedule not more than 
two users within one channel block.  
Simulation results demonstrate that our proposed 
resource allocation scheme provides considerable throughput
gains especially for users receiving low power from 
the base station.

\appendices

\section{Multiple Relay Selection with Coherent Transmission}
\label{sec:appendix}

We extend the results presented in Section~\ref{sec:solution} into the 
case when multiple relays can be involved in the RDF transmission mode
within the same resource unit (RU). In this case, all 
involved relays listen to the signal transmitted 
by the source in the first time slot, and
then decode the signal and retransmit it coherently to the destination
node in the second time slot such that the transmitted signals by the
relays add ``in-phase'' at the receiver in order to maximize the received
power at the destination $D$-node in the second time slot. 
This necessitates that in addition to the channel ``power gain'' information, 
the channels' ``phase'' information should be known as well at the transmitters.

The multiple relay selection case can be viewed as if the relay nodes represent
antennas of one ``virtual'' relay that has multiple distributed antennas.
Thus, the relay-destination link in this case is a multiple input single output
(MISO) channel. Furthermore, we can allow the source node as well to transmit
in the second slot as one of the ``virtual'' antennas of the ``virtual'' multiple-antenna
relay. The capacity and optimal resource allocation over
MISO channel is well known in the literature (e.g.~\cite{TsVi:05}).
The maximum achievable effective power gain $\gamma^{DF\_MISO}_{r_{\Omega_i[k]}d_i}[k]$ of 
the MISO channel to destination node $i$ in channel block $k$ is the sum of the 
effective power gains of the individual links
from every antenna to the $D$-node: 
\begin{equation}
\label{eq:equ_gamma}
\gamma^{DF\_MISO}_{r_{\Omega_i[k]}d_i}[k]=\gamma_{sd_i}[k]+\sum_{j\in\Omega_i[k]}\gamma_{r_jd_i}[k]
\end{equation}
where $\Omega_i[k]$ is the set of indices of involved relays in the transmission
to $D_i$ in channel block $k$.

The maximum achievable effective power gain \eqref{eq:equ_gamma} 
can be achieved by in-phase transmission
of the distributed antennas and by adjusting the power transmitted 
over every antenna $j$ to be directly proportional to its associated effective power gain
$\gamma_{r_jd}$. Thus, the power (spectral density) allocated to the relays follows:
\begin{equation}
\label{eq:MISO_power}
P^{DF\_MISO}_{r_jd_i}[k]=\frac{\gamma_{r_jd_i}[k]}{\gamma^{DF\_MISO}_{r_{\Omega_i[k]}d_i}[k]}P^{DF\_MISO}_{r_{\Omega_i[k]}d_i}[k],
\quad : j\in\Omega_i[k]
\end{equation}
where $P^{DF\_MISO}_{r_{\Omega_i[k]}d_i}[k]$ is the sum (of all antennas) power (spectral density)
allocated to the second time slot of the RDF link, and $\gamma^{DF\_MISO}_{r_{\Omega_i[k]}d_i}[k]$ 
is obtained in \eqref{eq:equ_gamma}. Similarly, the power allocated to the source
node in the second time slot of RDF link is 
$P^{DF\_MISO}_{sd_i}[k]=\frac{\gamma_{sd_i}[k]}{\gamma^{DF\_MISO}_{r_{\Omega_i[k]}d_i}[k]}P^{DF\_MISO}_{r_{\Omega_i[k]}d_i}[k]$.
All relays of the MISO RDF link should be able to decode the transmitted signal
by the source in the first time slot. Thus, the achievable rate region of
the RDF MISO link is:
\begin{align}
& R_i^{DF\_MISO}[k]= \nonumber \\
& \min\left\{
   \begin{array}{l}
        \frac{1}{2}\log\left(1+\gamma_{sr_{\Omega_i[k]}\_min}^{DF\_MISO}[k]P_{sd_i}^{DF\_MISO}[k]\right), \\
        \frac{1}{2}\log \Big( 1+\gamma_{sd_i}[k]P_{sd_i}^{DF\_MISO}[k]+ \\
        \quad \quad \quad \quad \quad \gamma^{DF\_MISO}_{r_{\Omega_i[k]}d_i}[k]P_{r_{\Omega_i[k]}d_i}^{DF\_MISO}[k] \Big)
    \end{array} 
 \right. \label{eq:MISOrelay}
\end{align}
where $\gamma_{sr_{\Omega_i[k]}\_min}^{DF\_MISO}[k]$ is the minimum effective power gain of the 
source-relay channels (among the involved relays in the MISO relay link). 
Since all relays should decode the information in the
first time slot, the achievable rate in the first time slot is bounded by the worst channel condition
among all relays.
Our objective is to find the optimal power allocation in the first and second time slots
of the RDF links, and additionally to select the relays to be involved in the
distributed MISO relay link.
We first observe the following facts:
\begin{itemize}
\item All relays that have $\gamma_{sr} < \gamma_{sd}$ should not be involved
in the MISO RDF link in order for the RDF link to support higher achievable rates
than in the case of direct transmission (DT). This fact can be easily proven similar to
\textit{Proposition $1$} in Section~\ref{sec:RDF}. 
\item \textit{Proposition $2$} in Section~\ref{sec:RDF} is also valid in the case of multiple relays selection.
This means that, for a given set ($\Omega_i$) of selected relays for the MISO RDF link, 
the power allocated to the first time slot ($P_{sd_i}^{DF\_MISO}$) and for the second 
time slot ($P_{r_{\Omega_i}d_i}^{DF\_MISO}$) respectively follow \eqref{eq:PsPrDF}
and \eqref{eq:define_acute} with the replacement of $\gamma_{sr}$ and $\gamma_{rd}$ by
$\gamma_{sr_{\Omega_i}\_min}^{DF\_MISO}$ and $\gamma^{DF\_MISO}_{r_{\Omega_i}d_i}$ respectively.
Similarly, the achievable rate can be obtained using \eqref{eq:rateRDFlink} and \eqref{eq:alpha}.
\end{itemize}

The power gain $\alpha$ that the MISO RDF link can support can be obtained, similar to \eqref{eq:alpha}, by:
\begin{equation}
\label{eq:MISO_alpha}
\alpha_i^{DF\_MISO}=\frac{\frac{\gamma_{sr_{\Omega_i}\_min}^{DF\_MISO}}{\gamma_{sd_i}}
\frac{\gamma^{DF\_MISO}_{r_{\Omega_i}d_i}}{\gamma_{sd_i}}}
{\frac{\gamma_{sr_{\Omega_i}\_min}^{DF\_MISO}}{\gamma_{sd_i}}
+\frac{\gamma^{DF\_MISO}_{r_{\Omega_i}d_i}}{\gamma_{sd_i}}-1}
\end{equation}

We aim to obtain the set of relays that maximize \eqref{eq:MISO_alpha}. It is obvious that
$\alpha$ increases (or decreases) by increasing (or decreasing) $\gamma_{sr_{\Omega_i}\_min}^{DF\_MISO}$
or $\gamma^{DF\_MISO}_{r_{\Omega_i}d_i}$. We can increase $\gamma_{sr_{\Omega_i}\_min}^{DF\_MISO}$
by removing the relay with worst channel quality with the source. However, by removing this relay from
the set of relays, we decrease $\gamma^{DF\_MISO}_{r_{\Omega_i}d_i}$ which is a sum of all relay-destination
channels gains. Thus, we can adopt the following procedure to find the set of relays that 
produce the highest possible power gain \eqref{eq:MISO_alpha}:
\begin{enumerate}
\item Start with the set of all relays that satisfy the condition $\gamma_{sr} > \gamma_{sd}$, 
and compute the achievable power gain \eqref{eq:MISO_alpha}.
\item Modify the set of relays by removing the relay that has the least $\gamma_{sr}$, 
and re-compute the power gain.
\item Repeat step $2$ until you get a set of one relay only.
\item Compare the power gains of the examined sets of relays' selection and choose the one 
that support the highest power gain. 
\end{enumerate}

Note that the results of Section~\ref{sec:optimal} and Section~\ref{sec:suboptimal} are also valid for 
the case of MISO RDF.


\bibliographystyle{IEEEbib}
\bibliography{./BIBFILES/papers,./BIBFILES/books}

\begin{thebibliography}{10}

\bibitem{Va:1971}
E.~C.~Van~Der Meulen,
\newblock ``Three-terminal communication channels,''
\newblock {\em Advances in Applied Probability}, vol. 3, pp. 120--154, 1971.

\bibitem{COEL:1979}
T.~Cover and A.~El Gamal,
\newblock ``Capacity theorems for the relay channel,''
\newblock {\em IEEE Transactions on Information Theory}, vol. 25, no. 5, pp.
  572--584, Sept. 1979.

\bibitem{NiTsWo:2004}
J.~Nicholas Laneman, D.~Tse, and G.~Wornell,
\newblock ``Cooperative diversity in wireless networks: Efficient protocols and
  outage behavior,''
\newblock {\em IEEE Transactions on Information Theory}, vol. 50, no. 12, pp.
  3062--3080, Dec. 2004.

\bibitem{LaLiEl:2006}
L.~Lai, K.~Liu, and H.~El Gamal,
\newblock ``The three node wireless network: Achievable rates and cooperation
  strategies,''
\newblock {\em IEEE Transactions on Information Theory}, vol. 52, no. 3, pp.
  805--828, Mar. 2006.

\bibitem{KrGaGu:2005}
G.~Kramer, M.~Gastpar, and P.~Gupta,
\newblock ``Cooperative strategies and capacity theorems for relay networks,''
\newblock {\em IEEE Transactions on Information Theory}, vol. 51, no. 9, pp.
  3037--3063, Sept. 2005.

\bibitem{ElMoZa:2006}
A.~El Gamal, M.~Mohseni, and S.~Zahedi,
\newblock ``Bounds on capacity and minimum energy-per-bit for {AWGN} relay
  channels,''
\newblock {\em IEEE Transactions on Information Theory}, vol. 52, no. 4, pp.
  1545--1561, Apr. 2006.

\bibitem{HoZh:2005}
A.~{Host-Madsen} and J.~Zhang,
\newblock ``Capacity bounds and power allocation for wireless relay channels,''
\newblock {\em IEEE Transactions on Information Theory}, vol. 51, no. 6, pp.
  2020--2040, June 2005.

\bibitem{ReKuVe:2004}
A.~Reznik, S.~R. Kulkarni, and S.~Verdu,
\newblock ``Degraded {G}aussian multirelay channel: Capacity and optimal power
  allocation,''
\newblock {\em IEEE Transactions on Information Theory}, vol. 50, no. 12, pp.
  3037--3046, Dec. 2004.

\bibitem{GuEr:2007}
D.~Gunduz and E.~Erkip,
\newblock ``Opportunistic cooperation by dynamic resource allocation,''
\newblock {\em IEEE Transactions on Wireless Communications}, vol. 6, no. 4,
  pp. 1446--1454, Apr. 2007.

\bibitem{StEr:2004}
A.~Stefanov and E.~Erkip,
\newblock ``Cooperative coding for wireless networks,''
\newblock {\em IEEE Transactions on Communications}, vol. 52, no. 9, pp.
  1470--1476, Sept. 2004.

\bibitem{YaHuXuMa:2009}
Y.~Yang, H.~Hu, J.~Xu, and G.~Mao,
\newblock ``Relay technologies for {WiMAX} and {LTE-Advanced} mobile systems,''
\newblock {\em IEEE Communications Magazine}, vol. 47, no. 10, pp. 100--105,
  Oct. 2009.

\bibitem{SeToBa:2009}
S.~Sesia, I.~Toufik, and M.~Baker,
\newblock {\em {LTE} -– The {UMTS} Long Term Evolution: From Theory to
  Practice},
\newblock John Wiley \& Sons, Ltd., 2009.

\bibitem{DaPaSkBe:2008}
E.~Dahlman, S.~Parkvall, J.~Sk\"old, and P.~Beming,
\newblock {\em 3{G} Evolution: {HSPA} and {LTE} for Mobile Broadband},
\newblock Academic Press, Elsevier, second edition, 2008.

\bibitem{EkFuKaMePaToWa:2006}
H.~Ekstrom, A.~Furuskar, J.~Karlsson, M.~Meyer, S.~Parkvall, J.~Torsner, and
  M.~Wahlqvist,
\newblock ``Technical solutions for the 3{G} long-term evolution,''
\newblock {\em IEEE Communications Magazine}, vol. 44, no. 3, pp. 38--45, Mar.
  2006.

\bibitem{Ek:2009}
H.~Ekstrom,
\newblock ``{QoS} control in the {3GPP} evolved packet system,''
\newblock {\em IEEE Communications Magazine}, vol. 47, no. 2, pp. 76--83, Feb.
  2009.

\bibitem{TsVi:05}
D.~Tse and P.~Viswanath,
\newblock {\em Fundamentals of Wireless Communication},
\newblock Cambridge University Press, May 2005.

\bibitem{Gal:68}
R.~G. Gallager,
\newblock {\em Information Theory and Reliable Communication},
\newblock John Wiley \& Sons, Inc., 1968.

\bibitem{TsHa:98}
D.~Tse and S.~Hanly,
\newblock ``Multiaccess fading channels -- {P}art 1: Polymatroid structure,
  optimal resource allocation and throughput capacities,''
\newblock {\em IEEE Transactions on Information Theory}, vol. 44, no. 7, pp.
  2796--2815, Nov. 1998.

\bibitem{LiGo1:2001}
L.~Li and A.~Goldsmith,
\newblock ``Capacity and optimal resource allocation for fading broadcast
  channels -- {P}art 1: Ergodic capacity,''
\newblock {\em IEEE Transactions on Information Theory}, vol. 47, no. 3, pp.
  1083--1102, Mar. 2001.

\bibitem{Ts:98}
D.~Tse,
\newblock ``Optimal power allocation over parallel {G}aussian broadcast
  channels,''
\newblock {\em unpublished, available at
  {www.eecs.berkeley.edu/$\sim$dtse/broadcast2.pdf}}.

\bibitem{LiVe:2005}
Y.~Liang and V.~Veeravalli,
\newblock ``Gaussian orthogonal relay channels: Optimal resource allocation and
  capacity,''
\newblock {\em IEEE Transactions on Information Theory}, vol. 51, no. 9, pp.
  3284--3289, Sept. 2005.

\bibitem{LiVeVi:2007}
Y.~Liang, V.~Veeravalli, and H.~Vincent Poor,
\newblock ``Resource allocation for wireless fading relay channels: Max-min
  solution,''
\newblock {\em IEEE Transactions on Information Theory}, vol. 53, no. 10, pp.
  3432--3453, Oct. 2007.

\bibitem{ReKuVe:2005}
A.~Reznik, S.~R. Kulkarni, and S.~Verdu,
\newblock ``Broadcast-relay channel: Capacity region bounds,''
\newblock in {\em Proceedings IEEE International Symposium on Information
  Theory (ISIT)}, Sept. 2005, pp. 820--824.

\bibitem{LiVe:2007}
Y.~Liang and V.~Veeravalli,
\newblock ``Cooperative relay broadcast channels,''
\newblock {\em IEEE Transactions on Information Theory}, vol. 53, no. 3, pp.
  900--928, Mar. 2007.

\bibitem{MeDa:2008}
W.~Mesbah and T.~Davidson,
\newblock ``Power and resource allocation for orthogonal multiple access relay
  systems,''
\newblock in {\em Proceedings IEEE International Symposium on Information
  Theory (ISIT)}, July 2008, pp. 2272--2276.

\bibitem{MeDa2:2008}
W.~Mesbah and T.~Davidson,
\newblock ``Power and resource allocation for orthogonal multiple access relay
  systems,''
\newblock {\em EURASIP Journal on Advances in Signal Processing, Special Issue
  on Cooperative Wireless Networks, {Article ID 476125}}, vol. 2008, pp. 1--15,
  2008.

\bibitem{ViMu:2005}
H.~Viswanathan and S.~Mukherjee,
\newblock ``Performance of cellular networks with relays and centralized
  scheduling,''
\newblock {\em IEEE Transactions on Wireless Communications}, vol. 4, no. 5,
  pp. 2318--2328, Sept. 2005.

\bibitem{KaPoHa:2009}
M.~Kaneko, P.~Popovski, and K.~Hayashi,
\newblock ``Throughput-guaranteed resource-allocation algorithms for
  relay-aided cellular {OFDMA} system,''
\newblock {\em IEEE Transactions on Vehicular Technology}, vol. 58, no. 4, pp.
  1951--1964, May 2009.

\bibitem{SaAdRaYaFaKi:2010}
M.~Salem, A.~Adinoyi, M.~Rahman, H.~Yanikomeroglu, D.~Falconer, and Y.~Kim,
\newblock ``Fairness-aware radio resource management in downlink {OFDMA}
  cellular relay networks,''
\newblock {\em IEEE Transactions on Wireless Communications}, vol. 9, no. 5,
  pp. 1628--1639, May 2010.

\bibitem{ShGo3:2008}
M.~Shaqfeh and N.~Goertz,
\newblock ``Performance analysis of scheduling policies for delay-tolerant
  applications in centralized wireless networks,''
\newblock in {\em Proceedings IEEE International Symposium on Performance
  Evaluation of Computer and Telecommunication Systems (SPECTS 2008)}, June
  2008, pp. 309--316.

\bibitem{ShGoMc:2008}
M.~Shaqfeh, N.~Goertz, and S.~McLaughlin,
\newblock ``Organizing multiuser operation in centralized wireless networks,''
\newblock in {\em Wireless World Research Forum Meeting 20}, Apr. 2008, pp.
  1--6.

\bibitem{ViTsLa:02}
P.~Viswanath, D.~Tse, and R.~Laroia,
\newblock ``Opportunistic beamforming using dumb antennas,''
\newblock {\em IEEE Transactions on Information Theory}, vol. 48, no. 6, pp.
  1277--1294, June 2002.

\bibitem{ShGo4:2008}
M.~Shaqfeh and N.~Goertz,
\newblock ``Channel-aware scheduling with resource-sharing constraints in
  wireless networks,''
\newblock in {\em Proceedings IEEE International Conference on Communications
  (ICC)}, May 2008, pp. 4149--4153.

\bibitem{SoLi3:2005}
G.~Song and Y.~Li,
\newblock ``Utility-based resource allocation and scheduling in {OFDM}-based
  wireless broadband networks,''
\newblock {\em IEEE Communications Magazine}, vol. 43, no. 12, pp. 127--134,
  Dec. 2005.

\bibitem{WaGiMa:2007}
X.~Wang, G.~Giannakis, and A.~Marques,
\newblock ``A unified approach to {QoS}-guaranteed scheduling for
  channel-adaptive wireless networks,''
\newblock {\em Proceedings of the IEEE}, vol. 95, no. 12, pp. 2410--2431, Dec.
  2007.

\bibitem{WaGi2:2007}
X.~Wang and G.~Giannakis,
\newblock ``Stochastic primal-dual scheduling subject to rate constraints,''
\newblock in {\em Proceedings IEEE Wireless Communications and Networking
  Conference (WCNC)}, Mar. 2007, pp. 1527--1531.

\bibitem{Pro:95}
J.~G. Proakis,
\newblock {\em Digital Communications},
\newblock McGraw-Hill International Editions, third edition, 1995.

\bibitem{Go:2005}
A.~Goldsmith,
\newblock {\em Wireless Communications},
\newblock Cambridge University Press, 2005.

\end{thebibliography}

\end{document}